\newcommand{\F}{{\cal F}}
\newcommand{\Sc}{{\cal S}}
\newcommand{\be}{\begin{equation}}
\newcommand{\en}{\end{equation}}
\newcommand{\bea}{\begin{eqnarray}}
\newcommand{\ena}{\end{eqnarray}}
\newcommand{\beano}{\begin{eqnarray*}}
\newcommand{\enano}{\end{eqnarray*}}
\newcommand{\M}{{\cal M}}
\newcommand{\V}{{\cal V}}
\newcommand{\ltwo}{{\Lc^2(\mathbb{R})}}
\renewcommand{\l}{\langle}
\renewcommand{\r}{\rangle}
\newcommand{\pin}[2]{\l#1 , #2\r}
\newcommand{\Lc}{{\cal L}}
\newcommand{\un}{{\underline n}}
\newtheorem{thm}{Theorem}
\newtheorem{cor}[thm]{Corollary}
\newtheorem{prop}[thm]{Proposition}
\newtheorem{defn}[thm]{Definition}
\newenvironment{proof}{\noindent {\bf Proof:}}{\hfill$\Box$}
\begin{document}

\begin{center}
{\Large \textbf{$e$-product of distributions, with applications}} \vspace{2cm%
}\\[0pt]

{\large F. Bagarello}
\vspace{3mm}\\[0pt]
Dipartimento di Ingegneria,\\[0pt]
Universit\`{a} di Palermo, I - 90128 Palermo,\\
and I.N.F.N., Sezione di Catania\\
E-mail: fabio.bagarello@unipa.it\\

\vspace{7mm}

\end{center}

\vspace*{2cm}

\begin{abstract}
\noindent We consider and reformulate a recent definition of multiplication between distributions. We show that this definition can be adopted, in particular, to prove biorthonormality of some distributions arising when looking to the (generalized) eigenvalues of a specific non self-adjoint number-like operator, considered in connection with the recently introduced {\em weak pseudo-bosons}. Several examples are discussed in details.
\end{abstract}

\vspace*{1cm}

{\bf Keywords:--}  Multiplication of distributions; Biorthogonal vectors; pseudo-bosons

\vfill

\newpage


\section{Introduction}

The role of Hilbert spaces in quantum mechanics is, of course, quite relevant, mainly in view of the probabilistic interpretation of the wave function $\Psi(x,t)$ of the given physical system $\Sc$. However, already several years ago, people started to be interested in describing situations in which Hilbert spaces, and $\ltwo$  in particular, are not enough. In these cases, it became useful to consider the so-called {\em rigged Hilbert spaces}, \cite{rhs}, or to work with {\em partial-inner-product} (PIP) spaces, \cite{pip}. In particular, the main idea for this latter extension is that the scalar product on, say, $\ltwo$, can be extended to other pairs of functions, $f(x)$ and $g(x)$, whenever $\overline{f(x)}\,g(x)\in\Lc^1(\mathbb{R})$. In this way we can still consider $\int_{\mathbb{R}}\overline{f(x)}\,g(x)\,dx$, and call it {\em an extended scalar product} between the two functions. This is the case, for instance, when $f(x)\in\Lc^p(\mathbb{R})$ and $g(x)\in\Lc^q(\mathbb{R})$, with $p^{-1}+q^{-1}=1$.

It is also very well known, since a long time, that distributions play a relevant role in some concrete physical situations. This was discussed, for instance, in \cite{wig} in the context of quantum field theory: quantum fields are operator-valued distributions, and this is strongly related to the divergences arising in concrete computations in quantum field theory, and to the need for a renormalization of the theory, see \cite{itz,bd,muta}, for instance.

Distributions do not appear only in quantum field theory. Indeed, they also play a role in quantum mechanics. For instance, they are used to model {\em point interactions}, see \cite{teta} and references therein to get the feeling of the relevance of this hot topics, mainly in mathematical physics (but not only).

In this paper we are motivated by some recent results connected to {\em weak pseudo-bosons} and, more generally, to quantum mechanics for non self-adjoint Hamiltonians, \cite{bagspringer}. In particular, already in 2020 we started to be interested to the role of distributions in concrete phyical systems, and in particular to a specific ladder structure (i.e., to some annihilation and creation operators) in the analysis of the position and the momentum operators $\hat x$ and $\hat p=-i\frac{d}{dx}$, \cite{bag2020JPA}. What we proved is that it is possible to use these operators to construct a family of eigenstates of a certain number-like operator, $\hat N=i\hat x\hat p$, but these (generalized) eigenstates are indeed distributions. This was also linked to what discussed in \cite{bagrocgar2019}, where we proved that the vacuum of the quantized damped harmonic oscillator is, indeed, a Dirac delta function.  The role of distributions in quantum mechanics for weak pseudo-bosons was further considered in \cite{bag2022,bag2023}. 
In all these contributions we had to consider the problem of introducing a sort of scalar product for certain  generalized eigenstates of some {\em number-like operators}, relevant for our systems. This was useful to extend, to a distributional settings, the notion of biorthogonality of vectors. More specifically, we first used a definition of this extended scalar product in terms of convolutions, \cite{bag2020JPA}. Later, and in particular in \cite{bag2023}, we proposed a different definition which we believe is more naturally connected to our quantum mechanical background, and which works well in some situations which are interesting for us. This last approach is the main interest of our analysis here.

The paper is organized as follows: in Section \ref{sect2} we will briefly review the two definitions of multiplication of distributions mentioned above, and already used in previous papers. In Section \ref{sect3} we focus on what we call {\em the $e$-product}, in a slightly new version, and we discuss some general results for it, including some preliminary facts on a adjoint map connected to the $e$-product. Section \ref{sect4} contains several examples, some of them positive (i.e., the $e$-product exists), and some negative (i.e., the $e$-product does not exist), while our conclusions are given in Section \ref{sect5}.

\section{Preliminaries}\label{sect2}

In \cite{vlad} a possible way to introduce a multiplication between distributions was discussed. This method was used in some recent applications, as, for instance, in \cite{bagspringer,bag2020JPA}, and it is based on the fact that the scalar product between two {\em good } functions $f(x)$ and $g(x)$, for instance $f(x),g(x)\in\Sc(\mathbb{R})$, the Schwartz space\footnote{For reader's convenience we recall that $\Sc(\mathbb{R})=\{f(x)\in C^\infty(\mathbb{R}):\, \lim_{|x|,\infty}|x|^kf^{(l)}(x)=0, \, \forall k,l=0,1,2,\ldots\}$.}, can be written in terms of a convolution between $\overline{f(x)}$ and $\tilde{g}(x)=g(-x)$: $\left<f,g\right>=(\overline{f}* \tilde{g})(0)$. Hence it is natural to define the scalar product between two tempered distributions, $F(x), G(x)\in\Sc'(\mathbb{R})$\footnote{ We recall that $\Sc'(\mathbb{R})$ is the set of  all continuous functionals on $\Sc(\mathbb{R})$, \cite{vlad}.}, as the following convolution:
\be
\left<F,G\right>=(\overline{F}* \tilde{G})(0),
\label{21}\en
whenever this convolution exists, which is not always true. Notice that, in order to compute $\left<F,G\right>$, it is first necessary to compute $(\overline{F}* \tilde{G})[f]$, $f(x)\in\Sc(\mathbb{R})$, and this can be done by using the equality $(\overline{F}* \tilde{G})[f]=\left<F,G*f\right>$, i.e. the action of $(\overline{F}* \tilde{G})(x)$ on the function $f(x)$, which, again, is not always well defined.

This approach has been used in some concrete situations in recent years, mainly to check if the generalized eigenstates of some non self-adjoint operator $\hat H$ are biorthonormal (with respect to this generalized product) to those of $\hat H^\dagger$. However, in \cite{bag2023}, we showed that this approach is not really the most convenient when dealing with a specific gain-loss system relevant for the analysis of the damped harmonic oscillator. For this reason, we proposed a different definition of {\em extended} scalar product, which refers to a fixed orthonormal basis (ONB) $\F_e$ in the Hilbert space where the physical system is defined.

Let us summarize some of the results discussed in \cite{bag2023}, using as our Hilbert space the space $\ltwo$.

 First of all we introduce an ONB in $\ltwo$,
$\F_e=\{e_{n}(x)\in \Sc(\mathbb{R}), \, n\geq0\}$. For instance,  the set $\F_e$ could be the set of the eigenstates of the quantum harmonic oscillator, see (\ref{41}) below. Using the Parceval identity we have that, for all $f(x), g(x)\in \ltwo$, 
$$
\pin{f}{g}=\sum_{n}\pin{f}{e_n}\pin{e_n}{g}=\sum_{\un} \overline{f}[e_n]\,g[\overline{e_n}]=\sum_{n} \overline{f}[e_n]\,g[{e_n}],
$$
assuming also that each $e_n(x)$ is real, to simplify the notation. Here we use
\be
\overline{h}[c]=\int_{\mathbb{R}} \overline{h(x)}\,c(x)\, dx=\pin{h}{c}.
\label{22}\en
 In the Parceval identity above the particular choice of $\F_e$ is not relevant, as far as $f(x), g(x)\in \ltwo$: in other words, if $\F_c=\{c_{n}(x)\in \ltwo, \, n\geq0\}$ is a different ONB of $\ltwo$, with elements not necessarily in $\Sc(\mathbb{R})$, we have
 $$
 \sum_{n}\pin{f}{e_n}\pin{e_n}{g}=\sum_{n}\pin{f}{c_n}\pin{c_n}{g},
 $$
 for all $f(x),g(x)\in\ltwo$. 
 
 Now, since $e_n(x)\in\Sc(\mathbb{R})$, the following quantity is well defined:
\be
\overline{K}[e_n]=\pin{K}{e_n},
\label{23}\en
 for all $K(x)\in\Sc'(\mathbb{R})$.
What might exist, or not, is the following sum
\be
\pin{F}{G}_e=\sum_{n}\overline{F}[e_n]G[e_n],
\label{24}\en
where $F(x), G(x)\in\Sc'(\mathbb{R})$ are two generic tempered distributions. For this reason, in \cite{bag2023} we have proposed the following:

\begin{defn}\label{def1}
	Two tempered distributions $F(x), G(x)\in\Sc'(\mathbb{R})$ are $\F_e$-multiplicable if the series in (\ref{24}) converges. In this case, we call $\pin{F}{G}_e$ the $e$-product of $F(x)$ and $G(x)$.
\end{defn}

It is obvious that for all the square integrable functions the $e$-product exists, and that the result is independent of the specific choice of $\F_e$. Then Definition \ref{def1} makes sense on a large set of tempered distributions, all those defined by ordinary square-integrable functions. Moreover, at least for these functions, (\ref{24}) and (\ref{21}) coincide, since they both coincide with their ordinary scalar product in $\ltwo$. We have already shown in \cite{bag2023} that $\pin{F}{G}_e$ is also well defined in other cases. We will discuss many other (positive and negative) examples in Section \ref{sect4}.

The following results, proved in \cite{bag2023}, are natural extensions of the properties of any { ordinary} scalar product to   $\pin{.}{.}_e$.

\vspace{2mm}

{\bf Result $\sharp 1$: } If $F(x), G(x)\in\Sc'(\mathbb{R})$ are such that $\pin{F}{G}_e$ exists, then also $\pin{G}{F}_e$ exists and
\be
\pin{F}{G}_e=\overline{\pin{G}{F}_e}.
\label{25}\en

\vspace{2mm}

{\bf Result $\sharp 2$: }  
If $F(x), G(x), L(x)\in\Sc'(\mathbb{R})$ are such that $\pin{F}{G}_e$ and $\pin{F}{L}_e$ exist, then also $\pin{F}{\alpha G+\beta L}_e$ exists, for all $\alpha,\beta\in\mathbb{C}$, and
\be
\pin{F}{\alpha G+\beta L}_e=\alpha\,\pin{F}{ G}_e+\beta\pin{F}{L}_e.
\label{26}\en
Then the $e$-product is linear in the second variable. Of course, in view of (\ref{25}), the $e$-product is anti-linear in the first variable.

\vspace{2mm}

{\bf Result $\sharp 3$: } 
If $F(x)\in\Sc'(\mathbb{R})$ is such that $\pin{F}{F}_e$ exists, then $\pin{F}{F}_e\geq0$. In particular, if $\pin{F}{F}_e=0$, then $F[f]=0$ for all $f(x)\in\Lc_e$, the linear span of the $e_n(x)$'s.

\vspace{2mm}

In what follows we will slightly reconsider Definition \ref{def1} and extend these results, in a way which appears more useful for future applications and for a more detailed analysis.

\section{The $e$-product}\label{sect3}

We start considering a set $\V$, dense in $\ltwo$, endowed with a topology $\tau_\V$ which makes of $\V$ a complete set of functions. Let $\V'$ be its dual set, i.e. the set of the linear continuous functionals on $\V$. Hence we have $\V\subseteq\ltwo\subseteq\V'$, where we are also admitting the possibility that these sets coincide.

 Of course, $\Sc(\mathbb{R})$ is a possible example of the set $\V$, with its own topology $\tau_\Sc$, but it is not the only one. For instance, in \cite{bag2022} we have introduced another such example, in connection with the inverted quantum harmonic oscillator. We will now summarize some of the results which are relevant for us. All the details and the proofs of the claims that follow can be found in \cite{bag2022}. 

\vspace{2mm}

{\bf Example} (An interesting settings):--
Let $\rho(x)$ be a strictly positive, L-measurable, function, and let us introduce the set
\be
\V_\rho=\left\{f(x)\in\ltwo:\,\rho(x)f(x)\in\ltwo\right\}.
\label{31}\en
If $\rho(x)\in\Lc^\infty(\mathbb{R})$, then $\V_\rho=\ltwo$.
Also, if $\rho(x)$ is continuous, not necessarily bounded, $\V_\rho$ is dense in $\ltwo$.

We see that $\V_\rho$ is a significantly large set, in many interesting situations. We endow now $\V_\rho$ with a topology, $\tau_\rho$, defined as follows:

\begin{defn} Given a sequence $\{f_n(x)\in\V_\rho\}$, we say that this is $\tau_\rho$-convergent if (i) $\{f_n(x)\in\V_\rho\}$ is $\|.\|$-Cauchy and if (ii) $\{\rho(x)f_n(x)\in\V_\rho\}$ is $\|.\|$-Cauchy.
\end{defn}

Hence, see \cite{bag2022}, 
	if $\rho^{-1}(x)\in\Lc^\infty(\mathbb{R})$, then $\V_\rho$ is closed with respect to $\tau_\rho$.

Restricting to real $\rho(x)$ satisfying the above assumption we further introduce another set of functions, $\Theta_\rho$, as follows:
\be
\Theta_\rho=\{\Phi(x), \mbox{L-measurable: } \Phi(x)\rho^{-1}(x)\in\ltwo\}.
\label{32}\en
It is possible to show that every element of $\Theta_\rho$ defines a continuous linear functional on $\V_\rho$, that is, an  element of $\V_\rho'$. In other words, $\Theta_\rho\subseteq\V_\rho'$. In this way we have constructed a second example of the functional settings which is useful for us. Notice that, despite of what happens for $\Sc(\mathbb{R})$ and $\Sc'(\mathbb{R})$, we are here considering only functions. 

\vspace{2mm}

Going back to the general situation, suppose now that $\F_e=\{e_n(x)\in\V,\,n\geq0\}$ is an ONB for $\ltwo$. As in (\ref{23}) we put
\be
\overline{K}[e_n]=\pin{K}{e_n},
\label{33}\en
for all $K(x)\in\V'$, which is well defined for all $n$, of course. As in Definition \ref{def1} we say that, taken $F(x), G(x)\in\V'$, their $e$-product exists if the following series converge:
\be
\pin{F}{G}_e=\sum_{n}\overline{F}[e_n]G[e_n]=\sum_n\pin{F}{e_n}\,\pin{e_n}{G},
\label{34}\en
where we are assuming again that $e_n(x)$ is a real function. We will say that $(F,G)\in\M_e$. In other words, $\M_e$ is the subset of $(\V',\V')$ consisting of all the pairs of elements of $\V'$ for which the $e$-product exists. As already observed, $(f(x),g(x))\in\M_e$, independently of the choice of $\F_e$ and $\forall f(x),g(x)\in\ltwo$. Moreover, similarly to what observed before, 
\begin{equation}
\pin{f}{g}_e=\pin{f}{g}_c,
\label{35}\end{equation}
for all $f(x),g(x)\in\ltwo$ and for all ONB $\F_e$ and $\F_c$.

Even if in Section \ref{sect4} we will show in several explicit examples that the $e$-product indeed exists outside $\ltwo$, it may be interesting to show a somehow general result concerning the existence of this product. For that we first recall what a Bessel sequence is. Let $\F_\varphi=\{\varphi_n(x)\in\ltwo\}$ be a set of square-integrable functions. We say that $\F_\varphi$ is a Bessel function if a $B>0$ exists, $B<\infty$, such that
\be
\sum_n|\pin{\varphi_n}{f}|^2\leq B\|f\|^2,
\label{36}
\en
$\forall f(x)\in\ltwo$, \cite{heil}. Suppose now that $\varphi_n(x)=\rho(x)e_n(x)$, where $\rho(x)$ was introduced in the example above for $\V_\rho$. We assume that $e_n(x)\in\V_\rho$ for all $n$. First of all, it is possible to check that $\sum_n\pin{f}{\varphi_n}\pin{\varphi_n}{g}$ converges for all $f(x), g(x)\in\ltwo$. Indeed we have
$$
\left|\sum_n\pin{f}{\varphi_n}\pin{\varphi_n}{g}\right|\leq \sum_n|\pin{f}{\varphi_n}||\pin{\varphi_n}{g}|\leq
\sqrt{\sum_n|\pin{f}{\varphi_n}|^2}\sqrt{\sum_n|\pin{\varphi_n}{g}|^2}\leq
 B\|f\|\|g\|,
$$
because of (\ref{36}) and of the Schwarz inequality. Now, taking $\Phi_1(x), \Phi_2(x)\in\Theta_\rho$, it follows that $\Phi_1(x)\rho^{-1}(x), \Phi_2(x)\rho^{-1}(x)\in\ltwo$. Then we have that $\sum_n\pin{\Phi_1\rho^{-1}}{\varphi_n}\pin{\varphi_n}{\Phi_1\rho^{-1}}$ converges. But this implies that $\sum_n\pin{\Phi_1}{e_n}\pin{e_n}{\Phi_1}$ converges as well. Hence we can define
\be
\pin{\Phi_1}{\Phi_2}_e=\sum_n\pin{\Phi_1}{e_n}\pin{e_n}{\Phi_1}=\sum_n\pin{\Phi_1\rho^{-1}}{\varphi_n}\pin{\varphi_n}{\Phi_1\rho^{-1}},
\en
which means that, in this case, $\Phi_1$ and $\Phi_2$ admit $e$-product.

Results $\sharp 1$, $\sharp 2$ and $\sharp 3$ of Section \ref{sect2} can be restated in the present settings, where $\Sc(\mathbb{R}), \Sc'(\mathbb{R})$ are replaced by $\V,\V'$. In particular, if $F(x)\in\V'$ is such that $\pin{F}{F}_e=0$, then, as in Section \ref{sect2}, $F[f]=0$ for all $f(x)\in\Lc_e$, the linear span of the $e_n(x)$'s. But still we cannot conclude that $F=0$, the reason being the lack of continuity (in the right topology) of $F$. For this reason it may be convenient to introduce here the following definition.

\begin{defn}
	(i) A set $\V_0\subseteq\V$ is called {\em $e$-separating} if, taken $\F\in\V'$ such that $\pin{F}{\varphi}_e=0$, $\forall\varphi\in\V_0$, then $F=0$.
	
	(ii) The ONB $\F_e$ is called {\em $\F_e$-separating} if, taken $F\in\V'$ such that $\pin{F}{e_n}=0$, $\forall n$, then $F=0$.
	
\end{defn}

Notice that, while in $(i)$ we are using $\pin{.}{.}_e$, in $(ii)$ the original scalar product $\pin{.}{.}$ is involved. Notice also that, if $F\in\V'$ is such that $\pin{F}{e_n}=0$, $\forall n$, then $\pin{F}{F}_e$ obviously exists, and its zero, and vice-versa.

It is now possible, and not surprising, to check that these two notions of separability are indeed connected. In fact, calling $\V_0$ as the set of (finite) linear combinations of the $e_n(x)$'s, $\V_0=l.s.\{e_n\}$, we have the following:

\begin{prop}
	If $\F_e$ is {\em $\F_e$-separating}, then $\V$ is $e$-separating.	
	Viceversa,  if $\V_0$ is $e$-separating, then $\F_e$ is {\em $\F_e$-separating}.
\end{prop}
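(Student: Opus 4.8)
The plan is to reduce both implications to a single elementary identity: for any $F\in\V'$ and any $\varphi\in\V_0=l.s.\{e_n\}$, the $e$-product $\pin{F}{\varphi}_e$ exists and coincides with the ordinary functional evaluation $\overline{F}[\varphi]=\pin{F}{\varphi}$. Indeed, writing $\varphi=\sum_{k} c_k e_k$ as a \emph{finite} linear combination and using that $\F_e$ is an ONB, one has $\pin{e_n}{\varphi}=c_n$, so that the defining series (\ref{34}) collapses to the finite sum $\pin{F}{\varphi}_e=\sum_n \pin{F}{e_n}\pin{e_n}{\varphi}=\sum_k c_k\,\pin{F}{e_k}$, which converges trivially and equals $\overline{F}[\varphi]$ by linearity of $F$. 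In particular, taking $\varphi=e_m$ gives the special case $\pin{F}{e_m}_e=\pin{F}{e_m}$. This is the only computation requiring care, and it is where the orthonormality of $\F_e$ (together with the reality of the $e_n$ assumed throughout) enters; once it is in hand, both directions are immediate.

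For the first implication I would assume $\F_e$ is $\F_e$-separating and take $F\in\V'$ with $\pin{F}{\varphi}_e=0$ for all $\varphi\in\V$. Restricting to the particular test functions $\varphi=e_m\in\V_0\subseteq\V$ and invoking the identity above, I obtain $\pin{F}{e_m}=\pin{F}{e_m}_e=0$ for every $m$. By the $\F_e$-separating hypothesis this forces $F=0$, which is exactly the statement that $\V$ is $e$-separating.

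For the converse I would assume $\V_0$ is $e$-separating and take $F\in\V'$ with $\pin{F}{e_n}=0$ for all $n$. Then for an arbitrary $\varphi=\sum_k c_k e_k\in\V_0$ the same identity gives $\pin{F}{\varphi}_e=\sum_k c_k\,\pin{F}{e_k}=0$. Thus $\pin{F}{\varphi}_e=0$ for every $\varphi\in\V_0$, and the $e$-separating property of $\V_0$ yields $F=0$; hence $\F_e$ is $\F_e$-separating.

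The argument is short because the two notions of separability, when tested against finite linear combinations of the $e_n$, are literally the same condition expressed in two notations; the genuine content is the remark already made in the text, namely that $\pin{F}{F}_e=0$ is equivalent to $\pin{F}{e_n}=0$ for all $n$. The only point deserving attention, and the natural place for an error to hide, is the interchange between the infinite defining series of the $e$-product and the finite sum, i.e. confirming that on $\V_0$ no convergence issue arises and that the relation $\pin{e_n}{e_m}=\delta_{nm}$ is applied correctly. Since any $\varphi\in\V_0$ involves only finitely many nonzero coefficients, this interchange is automatic, so I do not expect a real obstacle here.
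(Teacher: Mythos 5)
Your proof is correct and takes essentially the same route as the paper's: both hinge on the identity $\pin{F}{e_n}_e=\pin{F}{e_n}$ (the paper's (\ref{37}), which you additionally justify by observing that on finite linear combinations the defining series collapses to a finite sum), and then transfer the vanishing condition between the two notions of separability in the obvious way. The only cosmetic difference is that in the first direction the paper verifies the formally stronger claim that $\V_0$ itself is $e$-separating, whereas you verify literally that $\V$ is; since $\V_0\subseteq\V$, both yield the stated conclusion.
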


\begin{proof}
	Suppose first that $\F_e$ is {\em $\F_e$-separating}. We want to check that, if $F\in\V'$ is such that $\pin{F}{\varphi}_e=0$, $\forall\varphi\in\V_0$, then $F=0$. But this is clear, since our assumption means, in particular, that $\pin{F}{e_n}_e=0$, $\forall n$. It is now trivial to check that, $\forall G\in\V'$, the following equality holds:
	\be 
	\pin{G}{e_n}_e=\pin{G}{e_n},
	\label{37}
	\en 
	for all $n$. Hence we deduce that $\pin{F}{e_n}=0$, $\forall n$, so that $F=0$, using the fact that $\F_e$ is {\em $\F_e$-separating}.
	
	Let us now assume that $\V_0$ is $e$-separating, and let $F\in\V'$ be such that $\pin{F}{e_n}=0$, $\forall n$. We now want to prove that $F=0$. Again, due to (\ref{37}), we also have that $\pin{F}{e_n}_e=0$, $\forall n$, and therefore $\pin{F}{\varphi}_e=0$ for all $\varphi\in\V_0$. But $\V_0$ is $e$-separating. Hence $F=0$.

\end{proof}

As a corollary the following implications follow:

\begin{cor}
	The set $\V_0$ is $e$-separating if and only if $\F_e$ is {\em $\F_e$-separating}.
\end{cor}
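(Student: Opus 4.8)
The plan is to read off both implications directly from the Proposition just proved, since the Corollary is merely its restatement as a single equivalence. Accordingly I would not reprove anything from scratch: I would invoke the two halves of the Proposition and only take care of one small mismatch between the ambient set $\V$ appearing in the Proposition's statement and the span $\V_0 = l.s.\{e_n\}$ appearing in the Corollary.

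For the ``only if'' direction, namely that $\V_0$ being $e$-separating implies $\F_e$ is $\F_e$-separating, I would simply quote the ``Viceversa'' half of the Proposition, which is word for word this implication. No further argument is needed.

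For the converse direction, that $\F_e$ being $\F_e$-separating implies $\V_0$ is $e$-separating, I would appeal to the first half of the Proposition. Here one must be slightly careful, because that half is stated with conclusion ``$\V$ is $e$-separating,'' while the Corollary asks for the conclusion about $\V_0$. The key observation is that the argument given for the first half in fact yields the statement about $\V_0$ directly: it starts from an $F\in\V'$ with $\pin{F}{\varphi}_e=0$ for every $\varphi\in\V_0$, uses the identity (\ref{37}), $\pin{G}{e_n}_e=\pin{G}{e_n}$, to pass to $\pin{F}{e_n}=0$ for all $n$, and then concludes $F=0$ from the $\F_e$-separating hypothesis. So the same lines establish that $\V_0$ is $e$-separating, as required.

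I expect the only genuine subtlety to be precisely this bookkeeping between $\V$ and $\V_0$, and it is worth recording the monotonicity that makes it harmless. Since $\V_0\subseteq\V$, any $F$ that $e$-annihilates all of $\V$ also $e$-annihilates $\V_0$, whence $\V_0$ being $e$-separating forces $\V$ to be $e$-separating as well; it is the reverse passage, from $\V$ to the smaller $\V_0$, that is not automatic, and this is exactly what the proof of the Proposition's first half supplies. Once this point is noted, combining the two implications yields the asserted equivalence, the identity (\ref{37}) being the common engine that converts the $e$-product pairing on the basis vectors into the ordinary pairing $\pin{.}{.}$.
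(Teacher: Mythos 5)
Your proof is correct and follows the paper's own route: the Corollary is read off directly from the Proposition, whose first half's proof in fact establishes the $e$-separation of $\V_0$ (not merely of $\V$), exactly as you observe. Your remark on the $\V$ versus $\V_0$ bookkeeping correctly identifies and resolves the only imprecision in deducing the Corollary from the Proposition as literally stated.
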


It is now clear that the following is true:

if $\V_0$ is $e$-separating, and if $\pin{F}{\varphi}_e=\pin{G}{\varphi}_e$ for all $\varphi\in\V_0$, then $F=G$;

and, if $\F_e$ is $\F_e$-separating, and if $\pin{F}{e_n}_e=\pin{G}{e_n}_e$  $\forall n$, then $F=G$.

\subsection{Something about the adjoint}

The last results we deduced above are useful in connection with the possibility to define an adjoint map connected with the $e$-product. What we are going to discuss here are just few preliminary results, which however already show that something interesting can be deduced.

Given a linear operator $X$, which we assume to be defined on all of $\V'$, we would like to give a meaning to a new operator, $X^\ddagger$, defined by the following formula:
\be
	\pin{X^\ddagger \Phi}{\varphi}_e=\pin{\Phi}{X\varphi}_e,
\label{38}\en
where $\varphi\in\V_0$ (or $\varphi\in\V$) and $\Phi\in\V'$. Of course, in order for this to have a meaning, it is necessary that $(\Phi,X\varphi)\in\M_e$. Under this condition, we automatically have that $(X^\ddagger\Phi,\varphi)\in\M_e$. Notice further that, if $\ddagger$ is defined on $\V'$, it is also defined on $\V$. It is clear that, if $\V=\V'=\ltwo$ and $X\in B(\ltwo)$, the set of bounded operators on $\ltwo$, then $X^\dagger=X^\ddagger$ since, in particular, $(\Phi,X\varphi)\in\M_e$ $\forall\Phi,\varphi\in\ltwo$.

It is not difficult now to check for instance that, if $\V_0$ is $e$-separating, and if $X$ and $\Phi$ are such that $(\Phi,X^\ddagger\varphi)\in\M_e$ $\forall \varphi\in\V_0$, then $(X^\ddagger)^\ddagger\Phi=X\Phi$. Under similar assumptions we also deduce that
$$
(X+Y)^\ddagger\Phi=X^\ddagger\Phi+Y^\ddagger\Phi, \qquad (XY)^\ddagger\Phi=Y^\ddagger X^\ddagger \Phi.
$$
This means that $\ddagger$ has close properties to those required to any adjoint map. Incidentally we observe that the fact that $\ddagger$ is not always defined suggests to consider this map as a sort of adjoint for unbounded operators.

It is now useful to discuss a few examples of the $\ddagger$ adjoint.

Let $c=\frac{1}{\sqrt{2}}\left(x+\frac{d}{dx}\right)$ be the well known annihilation operator acting, in particular, on $\Sc(\mathbb{R})$. Let us take $\F_e$ as the ONB $\{e_n(x)\}$ of eigenstates of the harmonic oscillator, see (\ref{41}) below. We know that, \cite{mess},
\be
ce_0=0, \quad ce_n=\sqrt{n}e_{n-1}, \,n\geq1, \qquad \mbox{and}\qquad c^\dagger e_n=\sqrt{n+1}\,e_{n+1}, \,n\geq0.
\label{ladder}\en
 Now we have
$$
\pin{c^\ddagger \Phi}{\varphi}_e=\pin{ \Phi}{c\varphi}_e=\sum_{n=0}^\infty\pin{\Phi}{e_n}\pin{e_n}{c\varphi}=$$
$$=\sum_{n=0}^\infty\pin{\Phi}{e_n}\pin{c^\dagger e_n}{\varphi}=\sum_{n=0}^\infty\sqrt{n+1}\,\pin{\Phi}{e_n}\pin{e_{n+1}}{\varphi},
$$ 
 for all $\varphi$ and $\Phi$ for which this series converges. It is now easy to check that $\pin{c^\dagger \Phi}{\varphi}_e$ produces exactly the same result. Hence we conclude that (with a simplifying notation) $c^\ddagger=c^\dagger$. 
 
 With very similar computations we can also check that $(c^\ddagger)^\ddagger=c$:
 $$
 \pin{(c^\ddagger)^\ddagger \Phi}{\varphi}_e=\pin{ \Phi}{\varphi}_e,
 $$
for all $\Phi$ and $\varphi$ for which these quantities make sense. Formally, this can be seen simply as a consequence of the previous equality: $(c^\ddagger)^\ddagger=(c^\dagger)^\dagger=c$.

It is now clear that
$$
\left(\frac{d}{dx}\right)^\ddagger=-\frac{d}{dx}, \qquad \mbox{and}\qquad x^\ddagger=x.
$$
Then, at least for these operators, the adjoint $\ddagger$ is not really different from the usual one, $\dagger$. Hence we could say, in particular, that $x$ and $p=-i\,\frac{d}{dx}$ are {\em $\ddagger$-Hermitian} operators.

\section{Examples}\label{sect4}

In this section we will discuss a series of examples showing that, in some relevant cases, the $e$-product can be explicitly computed even outside $\ltwo$ and that, when this is possible, and when it is also possible to compute the integrals of the product of these two functions, the results coincide. 

In what follows we will always use, as our ONB $\F_e$, the eigenstates of the quantum harmonic oscillator already mentioned several times before in this paper:
\be
e_n(x)=\frac{1}{\sqrt{2^n\,n!\,\sqrt{\pi}}}\,H_n(x)\,e^{-x^2/2},
\label{41}\en
where $n=0,1,2,3,\ldots$, and $H_n(x)$ is the $n$-th Hermite polynomial. This is not the only choice, and it is not necessarily the most convenient in all the applications. In some other cases it could be possible that a different choice turns out to be more suitable for the particular situation we are interested in. However, these $e_n(x)$'s make it possible to compute all the relevant integrals in the examples considered below, and this is quite useful for us. We will comment on other possible choices of the ONB later on.

\subsection{Example nr. 1}

We first show how to compute the $e$-product of $F(x)=e^{\gamma x}$ and $G(x)=\delta(x)$, $\gamma\in\mathbb{R}$. It is clear that neither $F(x)$ nor $G(x)$ are square-integrable. In particular, $G(x)$ is not even a function, while $F(x)$  exponentially diverges for $x\rightarrow\pm\infty$ depending on the sign of $\gamma$. Nevertheless, since $F(x)$ is continuous, we can compute the following integral
$$
\int_{\mathbb{R}} F(x)\,G(x)\,dx=\int_{\mathbb{R}} e^{\gamma x}\,\delta(x)\,dx=1.
$$
With a little abuse of language, this integral could be called {\em the scalar product} between $F(x)$ and $G(x)$, even if only in an extended sense.

Let now use (\ref{34}) in our case. We have
\be
\pin{F}{G}_e=\sum_{n=0}^\infty{F}[e_n]G[e_n],
\label{42}\en
noticing that $F(x)=\overline{F(x)}$. It is clear that both ${F}[e_n]=\pin{F}{e_n}$ and $G[e_n]=\pin{e_n}{\delta}$ are well defined for all $n\geq0$. What is not so clear is whether the series converges or not. In fact, this is what we will show now, starting with the following obvious result:
\be
\pin{e_n}{\delta}=e_n(0)=\frac{1}{\sqrt{2^n\,n!\,\sqrt{\pi}}}\,H_n(0),
\label{43}\en
for all $n\geq0$. As for $F[e_n]$ we have
$$
{F}[e_n]=\pin{F}{e_n}=\frac{1}{\sqrt{2^n\,n!\,\sqrt{\pi}}}\int_{\mathbb{R}}e^{\gamma x}H_n(x)e^{-x^2/2}\,dx=\frac{\sqrt{2\pi}\,i^n}{\sqrt{2^n\,n!\,\sqrt{\pi}}}\,H(-i\gamma)e^{\gamma^2/2},
$$
using the following integral, \cite{grad}:
\be
\int_{\mathbb{R}}e^{ixy}H_n(x)e^{-x^2/2}\,dx=\sqrt{2\pi}\,i^n\,e^{-y^2/2}H_n(y).
\label{44}\en
Hence we have
$$
\pin{F}{G}_e=\sqrt{2}\,e^{\gamma^2/2}\sum_{n=0}^\infty\frac{1}{n!}\left(\frac{i}{2}\right)^nH_n(-i\gamma)H_n(0)=1,
$$
where we have used the Mehler's formula for the Hermite polynomials, \cite{mehler}:
\be
\sum_{n=0}^\infty\frac{z^n}{n!}H_n(x)H_n(y)=\left(1-4z^2\right)^{-1/2}\exp\left\{y^2-\frac{(y-2zx)^2}{1-4z^2}\right\},
\label{45}\en
for all $z\neq\frac{1}{2}$.

The conclusion is that the $e$-product exists in this case, and that it returns the expected result.

\subsection{Example nr. 2}

For our second example we take $F(x)=\cos(x)$ and $G(x)=\delta(x)$. Again, neither $F(x)$ nor $G(x)$ are square-integrable but the integral between the two can be computed since $F(x)$ is continuous, and we get
$$
\int_{\mathbb{R}} F(x)\,G(x)\,dx=\int_{\mathbb{R}} \cos(x)\,\delta(x)\,dx=1.
$$
As in the previous example we have $\pin{F}{G}_e=\sum_{n=0}^\infty{F}[e_n]G[e_n]$, and $G[e_n]=e_n(0)$. Slightly longer is the computation of $F[e_n]$, but is not complicated at all. We first observe that
$$
{F}[e_n]=\pin{F}{e_n}=\frac{1}{2}\left(c_n+\overline{c_n}\right), \qquad c_n=\int_{\mathbb{R}}e^{ix}e_n(x)dx.
$$
We see that $c_n$ is proportional to the Fourier transform of $e_n(x)$ computed in $p=1$. More explicitly, we can use the formula in (\ref{44}).  We deduce that
$$
F[e_{2l+1}]=0, \qquad F[e_{2l}]=\sqrt{\frac{\sqrt{\pi}}{2^{2l-1}(2l)!e}}(-1)^lH_{2l}(1),
$$
$\forall l\geq0$. Hence we find that
$$
\pin{F}{G}_e=\sqrt{\frac{2}{e}}\sum_{l=0}^\infty\frac{(-i)^{2l}}{2^{2l}(2l)!}H_{2l}(1)H_{2l}(0),
$$
which can be computed using  formula (\ref{45}). In fact, we can add to the sum above all the {\em odd terms},
$$
\sum_{l=0}^\infty\frac{(-i)^{2l+1}}{2^{2l+1}(2l+1)!}H_{2l+1}(1)H_{2l+1}(0),
$$
since this sum is zero due to the fact that $H_{2l+1}(0)=0$ for all $l\geq0$. In this way we get
$$
\sum_{l=0}^\infty\frac{(-i)^{2l}}{2^{2l}(2l)!}H_{2l}(1)H_{2l}(0)=\sum_{l=0}^\infty\frac{(-i)^{l}}{2^{l}(l)!}H_{l}(1)H_{l}(0)=\sqrt{\frac{e}{2}},
$$
so that $\pin{F}{G}_e=1$, as expected.

We could easily adapt these computations to check that, putting $L(x)=\sin(x)$, $\pin{L}{G}_e=0$, once again in agreement with the fact that $\int_{\mathbb{R}} L(x)\,G(x)\,dx=L(0)=0$.

\subsection{Example nr. 3}

Let us now take $F(x)=G(x)=\delta(x)$. In this case, of course, $\int_{\mathbb{R}} F(x)\,G(x)\,dx$ does not exist. Still we could try to see if $\pin{\delta}{\delta}_e$ makes sense or not. Unfortunately, the result will be that in this case the series defining $\pin{\delta}{\delta}_e$ does not converge. The positive aspect of this result, however, is that we will easily extend our conclusion to (weak) derivatives of $\delta(x)$, showing that in some cases the $e$-product can be easily computed. 

Let us start with $\pin{\delta}{\delta}_e$.
In this case we have
$$
\pin{\delta}{\delta}_e=\sum_{n=0}^\infty(e_n(0))^2=\frac{1}{\sqrt{\pi}}\sum_{l=0}^\infty\frac{H_{2l}^2(0)}{2^{2l}(2l)!}=\frac{1}{\sqrt{\pi}}\sum_{l=0}^\infty\frac{(2l)!}{4^l(l!)^2},
$$
since $H_{2l+1}(0)=0$ for all $l\geq0$. We have used here the following formula for the Hermite polynomial: $H_{2l}(0)=(-1)^l\frac{(2l)!}{l!}$. To study the convergence of the series one could use the ratio test, but this does not allow us to conclude: if we put  $a_l=\frac{(2l)!}{4^l(l!)^2}$ we easily find that $\lim_{l,\infty}\frac{a_{l+1}}{a_l}=1$. However, we can also deduce that
$$
\lim_{l,\infty}\left[l\frac{a_l}{a_{l+1}}-1\right]=\frac{1}{2}.
$$
Since this limit is less than one, the Raabe test allows us to conclude that $\sum_{l=0}^\infty\frac{(2l)!}{4^l(l!)^2}$ diverges, so that $\pin{\delta}{\delta}_e$ does not exist. Of course, this does not imply, in general, that $\pin{\delta}{\delta}_c$ exists finite for some different choice of ONB $\F_c$. We will comment on that later.
Here, what we can check very easily is the following rather general result:
\be
\pin{\delta^{(k)}}{\delta^{(l)}}_e=\left\{
\begin{array}{ll}
0 \hspace{3cm} \mbox  { if } k+l \mbox{ is odd},\\
\nexists\hspace{3cm} \mbox  { if }  k+l \mbox{ is even}.
\end{array}
\right.
\label{46}\en
Here $\delta^{(n)}(x)$ is the $n$-th weak derivative of the delta distribution. Of course this result extends the one above, which is recovered for $k=l=0$. The proof of the first result is easy, and follows from the following equality: $$\pin{\delta^{(k)}}{\delta^{(l)}}_e=\sum_{n=0}^\infty\pin{\delta^{(k)}}{e_n}\pin{e_n}{\delta^{(l)}}=(-1)^{k+l}\sum_{n=0}^\infty\pin{\delta}{e_n^{(k)}}\pin{e_n^{(l)}}{\delta}=(-1)^{k+l}\sum_{n=0}^\infty e_n^{(k)}(0)e_n^{(l)}(0).$$
It is now clear that, if $k+l$ is odd, either $k$ or $l$ is odd. Suppose $k$ is odd. Of course, $l$ must be even. Now, due to the well known parity properties of $e_n(x)$, expressed by $e_n(-x)=(-1)^ne_n(x)$, we have that $e_n^{(k)}(0)=0$ for all even $n$. Hence $e_{2j}^{(k)}(0)e_{2j}^{(l)}(0)=0$ for all $j=0,1,2,\ldots$. Similarly, and for the same $j$, $e_{2j+1}^{(k)}(0)e_{2j+1}^{(l)}(0)=0$ since $e_{2j+1}^{(l)}(0)=0$. This implies that $\pin{\delta^{(k)}}{\delta^{(l)}}_e=0$ in this case. The same idea works if $k$ is even and $l$ is odd.

This argument does not work if $k+l$ is even, i.e. when $k$ and $l$ are both even, or both odd. In the first case we find that $\pin{\delta^{(k)}}{\delta^{(l)}}_e=\sum_{n=0}^\infty e_{2n}^{(k)}(0)e_{2n}^{(l)}(0)$, while in the second case we get $\pin{\delta^{(k)}}{\delta^{(l)}}_e=\sum_{n=0}^\infty e_{2n+1}^{(k)}(0)e_{2n+1}^{(l)}(0)$. 

To check that $\pin{\delta^{(1)}}{\delta^{(1)}}_e$ does not exist, we first notice that, since the bosonic lowering operator $c$ introduced in (\ref{ladder}) can be written as $c=\frac{1}{\sqrt{2}}\left(x+\frac{d}{dx}\right)$, then
$$
\frac{d}{dx}e_{2n+1}(x)=\sqrt{2}\left(c-\frac{x}{\sqrt{2}}\right)e_{2n+1}(x)=\sqrt{2}\left(\sqrt{2n+1}e_{2n}(x)-\frac{x}{\sqrt{2}}e_{2n+1}(x) \right),
$$
using (\ref{ladder}),
if $n\geq1$, so that
$$
e_{2n+1}^{(1)}(0)=\sqrt{2(2n+1)}e_{2n}(0).
$$
Then we have 
$$
\pin{\delta^{(1)}}{\delta^{(1)}}_e=\sum_{n=0}^\infty e_{2n+1}^{(1)}(0)e_{2n+1}^{(1)}(0)=2\sum_{n=0}^\infty(2n+1)(e_{2n}(0))^2\geq 2\sum_{n=0}^\infty(e_{2n}(0))^2=2\pin{\delta}{\delta}_e,
$$
which, as we have seen, does not exist. It is not hard to imagine that, using a similar trick, it is also possible to estimate other  $\pin{\delta^{(k)}}{\delta^{(k)}}_e$ with $k\geq2$ in terms of $\pin{\delta}{\delta}_e$. We expect that similar estimates also extend to $\pin{\delta^{(k)}}{\delta^{(l)}}_e$, where $k$ and $l$ are different but have the same parity\footnote{In this sense the second line in (\ref{46}), more than being proved in general, is indeed an expected result.}.

\subsection{Example nr. 4}

This is maybe the most interesting example here, at least for us, since it is exactly what motivated us in \cite{bag2020JPA} to extend the ordinary scalar product from $\ltwo$ to distributions, as shown in (\ref{21}). It might be useful to briefly sketch the physical motivations of this particular example.

We start introducing the operators  $\hat xf(x)=xf(x)$ and $(\hat Dg)(x)=g'(x)$, the derivative of $g(x)$, for all $f(x)\in D(\hat x)=\{h(x)\in\Lc^2(\mathbb{R}): xh(x)\in \Lc^2(\mathbb{R} \}$ and $g(x)\in D(\hat D)=\{h(x)\in\Lc^2(\mathbb{R}: h'(x)\in \Lc^2(\mathbb{R} \}$.  The adjoints of $\hat x$ and $\hat D$ in $\ltwo$ are  $\hat x^\dagger=\hat x$, $\hat D^\dagger=-\hat D$, and we have $[D,x]f(x)=f(x)$, for all $f(x)\in\Sc(\mathbb{R})$, which is contained in $D(\hat x)\cap D(\hat D)$. If we look for the vacua of $a=\hat D$ and $b=\hat x$, \cite{bag2020JPA}, we easily find that these are $\varphi_0(x)=1$ and $\psi_0(x)=\delta(x)$, with a suitable choice of the normalization. It is clear, therefore, that neither $\varphi_0(x)$ nor $\psi_0(x)$ belong to $\Sc(\mathbb{R})$, or even to $\ltwo$. However we can still put
\be
\varphi_n(x)=\frac{b^n}{\sqrt{n!}}\,\varphi_0(x)=\frac{x^n}{\sqrt{n!}}, \qquad \psi_n(x)=\frac{(a^\dagger)^n}{\sqrt{n!}}\,\psi_0(x)=\frac{(-1)^n}{\sqrt{n!}}\,\delta^{(n)}(x),
\label{48}\en
for all $n=0,1,2,3,\ldots$. Here $\delta^{(n)}(x)$ is the n-th weak derivative of the Dirac delta function and it turns out that $a$, $b$ and their adjoints work as ladder operators on $\F_\varphi=\{\varphi_n(x)\}$ and  $\F_\psi=\{\psi_n(x)\}$, and that calling $N=ba$ we have
\be
N\varphi_k(x)=k\varphi_{k}(x), \qquad \qquad N^\dagger\psi_k(x)=k\psi_{k}(x),
\label{49}\en
for all $k=0,1,2,3,\ldots$. This is a non trivial example of what we have called weak pseudo-bosons in \cite{bag2020JPA}.

In that paper we were able to prove that an extended scalar product between $\varphi_k(x)$ and $\psi_k(x)$ can be defined, as in (\ref{21}) and that
\be\pin{\varphi_n}{\psi_k}=\delta_{n,k},
\label{47}\en
$\forall n,k\geq0$. Here $\delta_{n,k}$ is the Kronecker delta. These and other results are given in \cite{bagspringer,bag2020JPA}. What we want to do here is to show that the same biorthonormality result can be deduced by replacing $\pin{.}{.}$ with $\pin{.}{.}_e$.

We have:
\be
\pin{\varphi_n}{\psi_m}_e=\frac{1}{\sqrt{n!\,m!}}\pin{x^n}{\delta^{(m)}}_e=\frac{1}{\sqrt{n!\,m!}}\sum_{k=0}^\infty\pin{x^n}{e_k}\pin{e_k}{\delta^{(m)}}.
\label{410}\en
Now, using the integral in (\ref{44}), we have
$$
\int_{\mathbb{R}}e^{-t^2/2}H_k(t)\delta^{(m)}(t)\,dt=(-1)^m\int_{\mathbb{R}}\frac{d^m}{dt^m}\left(e^{-t^2/2}H_k(t)\right)\delta(t)\,dt=
$$
$$
=(-1)^m\,\frac{(-i)^k}{\sqrt{2\pi}}\int_{\mathbb{R}}\frac{d^m}{dt^m}\left(\int_{\mathbb{R}}e^{ixt}e^{-x^2/2}H_k(x)\right)\delta(t)\,dt=\frac{(-i)^{m+k}}{\sqrt{2\pi}}\int_{\mathbb{R}}x^me^{-x^2/2}H_k(x)\,dx.
$$
Hence we find
\be
\pin{\varphi_n}{\psi_m}_e=\frac{(-i)^m}{\pi\,\sqrt{2\,n!\,m!}}\sum_{k=0}^\infty \frac{(-i)^k}{2^k\,k!}\,I_k(n)\,I_k(m),
\label{411}\en
where
\be
I_k(p)=\int_{\mathbb{R}}x^pe^{-x^2/2}H_k(x)\,dx,
\label{412}\en
for $p,k\geq0$. To compute now (\ref{411}) it is useful to notice first that, due to the parity of the Hermite polynomials,
\be
I_{2n}(2l+1)=I_{2n+1}(2l)=0
\label{413}\en
for all possible values of $n$ and $l$ in $\mathbb{N}_0=\mathbb{N}\cup\{0\}$. Hence we have, clearly,
\be
\pin{\varphi_{2n}}{\psi_{2m+1}}_e=\pin{\varphi_{2n+1}}{\psi_{2m}}_e=0,
\label{414}\en
$\forall n,m\in\mathbb{N}_0$. Slightly more complicated is the computation of $\pin{\varphi_{2n}}{\psi_{2m}}_e$ and $\pin{\varphi_{2n+1}}{\psi_{2m+1}}_e$. In these cases we need the following formulas, see \cite{grad}:
$$
\int_0^\infty e^{-2\alpha x^2}x^\nu H_{2r}(x)\,dx= \frac{(-1)^r}{\sqrt{\pi}\,\alpha^{(\nu+1)/2}} 2^{2r-\frac{3}{2}-\frac{1}{2}\nu}\,\Gamma\left(\frac{\nu+1}{2}\right)\Gamma\left(r+\frac{1}{2}\right)F\left(-r,\frac{\nu+1}{2};\frac{1}{2};\frac{1}{2\alpha}\right)
$$
where $\Re(\alpha)>0$ and $\Re(\nu)>-1$, and
$$
\int_0^\infty e^{-2\alpha x^2}x^\nu H_{2r+1}(x)\,dx= \frac{(-1)^r}{\sqrt{\pi}\,\alpha^{\nu/2+1}} 2^{2r-\frac{1}{2}\nu}\Gamma\left(\frac{\nu}{2}+1\right)\Gamma\left(r+\frac{3}{2}\right)F\left(-r,\frac{\nu}{2}+1;\frac{3}{2};\frac{1}{2\alpha}\right)
$$
where, again, $\Re(\alpha)>0$ with $\Re(\nu)>-2$. Here $\Gamma$ is the Gamma function, while $F$ is the Gauss hypergeometric function. After some computations we find
\be
\pin{\varphi_{2n}}{\psi_{2m}}_e=\frac{(-1)^m}{\pi^2\sqrt{\,n!\,m!}}\,2^{n+m+1/2}\,\Gamma\left(n+\frac{1}{2}\right)\Gamma\left(m+\frac{1}{2}\right)\sum_{j=0}^\infty a_j(n,m),
\label{415}\en
and 
\be
\pin{\varphi_{2n+1}}{\psi_{2m+1}}_e=\frac{(-1)^{m+1}}{\pi^2\sqrt{\,n!\,m!}}\,2^{n+m+11/2}\,\Gamma\left(n+\frac{3}{2}\right)\Gamma\left(m+\frac{3}{2}\right)\sum_{j=0}^\infty b_j(n,m),
\label{416}\en
where we have introduced
\be
a_j(n,m)=\frac{(-4)^j}{(2j)!}\left(\Gamma\left(j+\frac{1}{2}\right)\right)^2F\left(-j,n+\frac{1}{2};\frac{1}{2};2\right)F\left(-j,m+\frac{1}{2};\frac{1}{2};2\right),
\label{417}\en
and
\be
b_j(n,m)=\frac{(-4)^j}{(2j+1)!}\left(\Gamma\left(j+\frac{3}{2}\right)\right)^2F\left(-j,n+\frac{3}{2};\frac{3}{2};2\right)F\left(-j,m+\frac{3}{2};\frac{3}{2};2\right).
\label{418}\en

\subsubsection{An introductory case: $n=m=0$}

We start computing $\pin{\varphi_{0}}{\psi_{0}}_e$ using (\ref{415}) and (\ref{417}). We have
$$
\pin{\varphi_{0}}{\psi_{0}}_e=\frac{1}{\pi^2}\,2^{1/2}\,\Gamma\left(\frac{1}{2}\right)\Gamma\left(\frac{1}{2}\right)\sum_{j=0}^\infty a_j(0,0),
$$
with
$$
a_j(0,0)=\frac{(-4)^j}{(2j)!}\left(\Gamma\left(j+\frac{1}{2}\right)\right)^2F\left(-j,\frac{1}{2};\frac{1}{2};2\right)F\left(-j,\frac{1}{2};\frac{1}{2};2\right)=\frac{(-4)^j}{(2j)!}\left(\Gamma\left(j+\frac{1}{2}\right)\right)^2
$$
 since, \cite{grad}, $F\left(-j,\frac{1}{2};\frac{1}{2};2\right)=(1-2)^j=(-1)^j$, for all $j\geq0$. The sum of the series can be computed, and we get (using, e.g., Mathematica)
 $$
 \sum_{j=0}^\infty a_j(0,0)=\frac{\pi}{\sqrt{2}}.
 $$
Putting all together we easily conclude now that $\pin{\varphi_{0}}{\psi_{0}}_e=1$.

Now, to compute  $\pin{\varphi_{1}}{\psi_{1}}_e$, we use (\ref{416}) and (\ref{418}), and we get
$$
\pin{\varphi_{1}}{\psi_{1}}_e=\frac{2^{11/2}}{\pi^2}\,\Gamma\left(\frac{3}{2}\right)\Gamma\left(\frac{3}{2}\right)\sum_{j=0}^\infty b_j(0,0),
$$
where
$$
b_j(0,0)=\frac{(-4)^j}{(2j+1)!}\left(\Gamma\left(j+\frac{3}{2}\right)\right)^2(F\left(-j,\frac{3}{2};\frac{3}{2};2\right))^2=\frac{(-4)^j}{(2j+1)!}\left(\Gamma\left(j+\frac{3}{2}\right)\right)^2,
$$
since we have  $F\left(-j,\frac{3}{2};\frac{3}{2};2\right)=(1-2)^j=(-1)^j$. If we now compute $ \sum_{j=0}^\infty b_j(0,0)$ (using again Mathematica), the result is that the series does not converge. However, we can show that the series is actually Abel-convergent. Indeed, let us consider the following power series:
$$
S(z)=\sum_{j=0}^\infty \frac{(\Gamma(j+3/2))^2}{(2j+1)!}\,z^j.
$$
It is clear that we can recover $ \sum_{j=0}^\infty b_j(0,0)=\lim_{z,-4}S(z)$, if this limit exists. Indeed it is easy to check that the radius of convergence of $S(z)$ is $\rho_S=4$, so that $z=-4$ is on the boundary. Now, while $\lim_{z,4}S(z)$ does not exist, we find that $\lim_{z,-4}S(z)=\frac{\pi}{8\,\sqrt{2}}$. We finally conclude that
$$
\pin{\varphi_{1}}{\psi_{1}}_e=\frac{2^{11/2}}{\pi^2}\,\left(\frac{\sqrt{\pi}}{2}\right)^2\frac{\pi}{8\,\sqrt{2}}=1,
$$
as expected.

In other terms we have
$$
\pin{1}{\delta}_e=\pin{x}{-\delta'}_e=1.
$$

\subsubsection{Larger $n$ and $m$}

For larger values of $n$ and $m$ it is convenient to consider the same Abel summation we have used for $\pin{\varphi_{1}}{\psi_{1}}_e$. The main problem in (\ref{415}) and (\ref{416}) is to compute $\sum_{j=0}^\infty a_j(n,m)$ and $\sum_{j=0}^\infty b_j(n,m)$. We use the same idea as for $n=m=0$, introducing
$$
\sigma(z;2n,2m)=\sum_{j=0}^\infty \alpha_j(n,m) z^j, \qquad \sigma(z;2n+1,2m+1)=\sum_{j=0}^\infty \beta_j(n,m) z^j, 
$$
where
$$
\alpha_j(n,m)=\frac{1}{(2j)!}\left(\Gamma\left(j+\frac{1}{2}\right)\right)^2F\left(-j,n+\frac{1}{2};\frac{1}{2};2\right)F\left(-j,m+\frac{1}{2};\frac{1}{2};2\right)
$$
and
$$
\beta_j(n,m)=\frac{1}{(2j+1)!}\left(\Gamma\left(j+\frac{3}{2}\right)\right)^2F\left(-j,n+\frac{3}{2};\frac{3}{2};2\right)F\left(-j,m+\frac{3}{2};\frac{3}{2};2\right).
$$
Once again, a direct computation of $\sigma(-4;2n,2m)$ and $\sigma(-4;2n+1,2m+1)$ with Mathematica produces a non-convergence warning message. However, if we compute $\lim_{z,-4}\sigma(z;2n,2m)$ and $\lim_{z,-4}\sigma(z;2n+1,2m+1)$ we conclude that, as expected,
\be
\pin{\varphi_n}{\psi_m}_e=\delta_{n,m}.
\label{419}\en
Indeed we have no rigorous prove of this result for all $n,m\in\mathbb{N}_0$. However, we have performed the check of this result using Mathematica in hundreds of cases, fixing different values of $n$
 and $m$, and all the outputs of these checks are compatible with (\ref{419}). Of course this is only an indication that (\ref{419}) holds true for all $n,m\in\mathbb{N}_0$, but (in our opinion) it is really a {\bf strong} indication! We hope to be able to provide an analytical proof of (\ref{419}) in a close future. 
 
 \vspace{2mm}
 
 {\bf Remark:--} in these computations the role of the Abel convergence is essential. This is interesting, and it was already noticed in \cite{bag2023}. We wonder if there is some major relation between the $e$-product of certain distributions and the necessity of using this kind of convergence. This is work in progress.

\subsection{Two variations on the same theme}

Let us first try to compute $\pin{\varphi_n}{\varphi_m}_e$, where $\varphi_n(x)=\frac{x^n}{\sqrt{n!}}$ is the function introduced in (\ref{48}). Using the definition in (\ref{415}) we find that
\be
\pin{\varphi_n}{\varphi_m}_e=\frac{1}{\sqrt{\pi\,n!\,m!}}\sum_{k=0}^\infty \frac{1}{2^k\,k!}\,I_k(n)\,I_k(m).
\label{420}\en
As in (\ref{414}) we can easily see that
\be
\pin{\varphi_{2n}}{\varphi_{2m+1}}_e=0,
\label{421}\en
$\forall n,m\in\mathbb{N}_0$. With similar computations as before we find that, see (\ref{415}) and (\ref{416}),

\be
\pin{\varphi_{2n}}{\varphi_{2m}}_e=\frac{2^{n+m+1}}{\sqrt{\pi^3\,(2n)!\,(2m)!}}\,\Gamma\left(n+\frac{1}{2}\right)\Gamma\left(m+\frac{1}{2}\right)\sum_{j=0}^\infty c_j(n,m),
\label{422}\en
and 
\be
\pin{\varphi_{2n+1}}{\varphi_{2m+1}}_e=\frac{2^{n+m+6}}{\sqrt{\pi^3\,(2n+1)!\,(2m+1)!}}\,\Gamma\left(n+\frac{3}{2}\right)\Gamma\left(m+\frac{3}{2}\right)\sum_{j=0}^\infty d_j(n,m),
\label{423}\en
where
\be
c_j(n,m)=\frac{4^j}{(2j)!}\left(\Gamma\left(j+\frac{1}{2}\right)\right)^2F\left(-j,n+\frac{1}{2};\frac{1}{2};2\right)F\left(-j,m+\frac{1}{2};\frac{1}{2};2\right),
\label{424}\en
and
\be
d_j(n,m)=\frac{4^j}{(2j+1)!}\left(\Gamma\left(j+\frac{3}{2}\right)\right)^2F\left(-j,n+\frac{3}{2};\frac{3}{2};2\right)F\left(-j,m+\frac{3}{2};\frac{3}{2};2\right).
\label{425}\en
We observe that the only difference between $(c_j(n,m),d_j(n,m))$ and $(a_j(n,m),b_j(n,m))$ is that $-4$ is replaced by 4. Then we could construct, as we did before for $\sigma(z;2n,2m)$ and $\sigma(z;2n+1,2m+1)$, two power series and look for the radius of convergence. We can check that this radius is 4 for both the series. The point is that, while the limits for $z\rightarrow -4$ of $\sigma(z;2n,2m)$ and $\sigma(z;2n+1,2m+1)$ exist finite, the limits for $z\rightarrow 4$ of $\sigma(z;2n,2m)$ and $\sigma(z;2n+1,2m+1)$ do not. This means that $\pin{\varphi_{k}}{\varphi_{l}}_e$ does not exist for our choice of $\F_e$ if $k$ and $l$ have the same parity, while it is zero if they have opposite parity. Of course, this result leaves open the possibility of having a finite result replacying $\F_e$ with a different, and possibly more convenient, ONB. This possibility is work in progress.

\vspace{3mm}

If  instead of considering $\pin{\varphi_n}{\varphi_m}_e$ we focus on $\pin{\psi_n}{\psi_m}_e$, see (\ref{48}), we find that, with more or less the same computations,
\be
\pin{\psi_n}{\psi_m}_e=\frac{(-i)^{n+m}}{2\pi\,\sqrt{\pi\,n!\,m!}}\sum_{k=0}^\infty \frac{(-1)^k}{2^k\,k!}\,I_k(n)\,I_k(m),
\label{426}\en
and the conclusion is similar to the previous one: $\pin{\psi_n}{\psi_m}_e=0$ if $n$ and $m$ have opposite parity, while it does not exist if $n$ and $m$ have the same parity. In fact, we can check that
$$
\pin{\psi_{2n}}{\psi_{2m}}_e=2\pi(-1)^{n+m}\pin{\varphi_{2n}}{\varphi_{2m}}_e,
$$
and 
$$
\pin{\psi_{2n+1}}{\psi_{2m+1}}_e=2\pi(-1)^{n+m}\pin{\varphi_{2n+1}}{\varphi_{2m+1}}_e.
$$
It might be worth noticing that the fact that a scalar product involving $\varphi_n(x)$ and $\psi_m(x)$ can only be safely introduced between some $\varphi_n(x)$ and some $\psi_m(x)$, but not between some $\varphi_n(x)$ and some other $\varphi_m(x)$ (or between some $\psi_n(x)$ and some other $\psi_m(x)$), is not a big surprise. In fact, because of their nature of generalized eigenstates of two number-like operators, see (\ref{49}), connected by the adjoint operation, it is somehow expected that they satisfy a sort of biorthonormality condition, even if in an extended sense, as in our case. On the other hand, there is no reason for the two families $\F_\varphi=\{\varphi_n(x)\}$ and $\F_\psi=\{\psi_n(x)\}$ to be orthonormal families, and in fact this is not true, even with respect to the $e$-product.

\section{Conclusions}\label{sect5}

In this paper we continue our analysis on the $\pin{.}{.}_e$ product we have recently proposed in connection with a gain and loss system. We discuss some of its properties, and we also briefly consider the possibility of considering an adjoint operation connected to $\pin{.}{.}_e$. Then we show how this product can be applied to functions which are not in $\ltwo$, and to distributions. In particular we have seen that a biorthonormality result can be found for the eigenstates of number-like operators constructed, following the general idea of weak pseudo-bosons, starting from two different vacua and acting on them with powers of suitable raising operators, which are essentially a multiplication and a derivation operator.

There are several ways to extend and enrich the analysis discussed here. First of all, it would be very interesting to check how much changing the ONB affects the result of our computations. Also, we have seen in the paper that several  $\pin{f}{g}_e$ do not exist (meaning that the series which defines the scalar product do not converge) for the choice of $\F_e$ we have considered all along this paper. But it remains open the possibility of choosing a different ONB, $\F_c$, such that $\pin{f}{g}_c$ does exist. This is part of our projects for this line of research, together with a deeper analysis of the properties of the adjoint $\ddagger$. The role of the Abel convergence is also a fascinating aspect of the $e$-product, and should be understood. Last but not least, we hope to propose more physical applications of the $e$-product, for some relevant concrete system.

\section*{Acknowledgements}

The author acknowledges partial financial support from Palermo University and from G.N.F.M. of the INdAM.  This work has also been partially supported by the PRIN grant {\em Transport phenomena in low dimensional
	structures: models, simulations and theoretical aspects}- project code 2022TMW2PY - CUP B53D23009500006, and partially by  the project ICON-Q, Partenariato Esteso NQSTI - PE00000023, Spoke 2.

\section*{Data accessibility statement}

This work does not have any experimental data.

%
%
%
%
%

\section*{Funding statement}

The author acknowledges partial financial support from Palermo University and from G.N.F.M. of the INdAM. This work has also been partially supported by the PRIN grant {\em Transport phenomena in low dimensional
	structures: models, simulations and theoretical aspects}.

\end{document}